\newtheorem{theorem}{Theorem}[section]
\newtheorem{corollary}{Corollary}[theorem]
\newtheorem{lemma}[theorem]{Lemma}
\newcommand\BibTeX{{\rmfamily B\kern-.05em \textsc{i\kern-.025em b}\kern-.08em
T\kern-.1667em\lower.7ex\hbox{E}\kern-.125emX}}
\newcommand\barbelow[1]{\stackunder[1.2pt]{$#1$}{\rule{.8ex}{.075ex}}}
\renewcommand{\bar}{\overline} 
\renewcommand{\barbelow}{\underline} 
\newcommand{\footremember}[2]{
    \footnote{#2}
    \newcounter{#1}
    \setcounter{#1}{\value{footnote}}
}
\newcommand{\footrecall}[1]{
    \footnotemark[\value{#1}]
}
\providecommand{\keywords}[1]
{
  \small	
  \textbf{\textit{Keywords---}} #1
}
\title{Posterior Uncertainty Estimation via a Monte Carlo Procedure Specialized for Data Assimilation}
\author{
Michael Stanley\footremember{cmustats}{Department of Statistics and Data Science, Carnegie Mellon University, Pittsburgh, PA, USA.} 
\\ {\small \href{mailto:mcstanle@andrew.cmu.edu}{mcstanle@andrew.cmu.edu}~} 
\and Mikael Kuusela\footrecall{cmustats}
\\ {\small \href{mailto:mkuusela@andrew.cmu.edu}{mkuusela@andrew.cmu.edu}~} 
\and Brendan Byrne\footremember{jpl}{Jet Propulsion Laboratory, California Institute of Technology, Pasadena, CA, USA.} 
\\ {\small \href{mailto:brendan.k.byrne@jpl.nasa.gov}{brendan.k.byrne@jpl.nasa.gov}~} 
\and Junjie Liu\footrecall{jpl} 
\\ {\small \href{mailto:junjie.liu@jpl.nasa.gov}{junjie.liu@jpl.nasa.gov}~} 
}
\begin{document}
\date{}
\maketitle

\begin{abstract}
    Through the Bayesian lens of data assimilation, uncertainty on model parameters is traditionally quantified through the posterior covariance matrix. However, in modern settings involving high-dimensional and computationally expensive forward models, posterior covariance knowledge must be relaxed to deterministic or stochastic approximations. In the carbon flux inversion literature, Chevallier et al. \cite{chevallier} proposed a stochastic method capable of approximating posterior variances of linear functionals of the model parameters that is particularly well-suited for large-scale Earth-system data assimilation tasks. This note formalizes this algorithm and clarifies its properties. We provide a formal statement of the algorithm, demonstrate why it converges to the desired posterior variance quantity of interest, and provide additional uncertainty quantification allowing incorporation of the Monte Carlo sampling uncertainty into the method's Bayesian credible intervals. The methodology is demonstrated using toy simulations and a realistic carbon flux inversion observing system simulation experiment.
\end{abstract}

\keywords{Bayesian inference; data assimilation; uncertainty quantification; carbon flux inversion; uncertainty on uncertainty; observing system simulation experiment}

\section{Introduction} \label{sec:intro_new}
Uncertainty quantification (UQ) for data assimilation (DA) tasks is often non-trivial, but scientifically paramount to their understanding and interpretation. Since DA broadly describes methods combining observations with a computational model of a physical system, a Bayesian framework is often sensible for inference on the model parameters, as the posterior distribution quantifies knowledge resulting from this combination. As such, Bayesian statistical models are regularly used as the UQ framework. For example, Bayesian procedures play a central role in the general idea of optimal estimation \cite{rodgers}, the broad field of DA \cite{kalnay}, and the more specific field of carbon flux estimation \cite{deng_2014, liu_2016}. Inference for DA tasks using this statistical framework is typically challenging due to high-dimensional settings (e.g., high-resolution spatiotemporal grids) and the computer model’s implicit numerical definition of the physical system of interest, often requiring supercomputers and long compute times. Prior and observation error distributions are often assumed to be Gaussian, yielding a Gaussian posterior distribution under a linear forward model. Although a Gaussian posterior can be exactly characterized by its mean vector and covariance matrix, the high-dimensionality makes dealing directly with the posterior covariance matrix intractable and the implicit computationally demanding forward model makes infeasible standard traditional Bayesian computational techniques, such as Markov Chain Monte Carlo (MCMC). The implicit posterior necessitates the development of computational methods that implicitly access it.

CO$_2$ flux inversion is a representative example of a high-dimensional DA task to which Bayesian modeling is applied and used to compute estimated flux fields \cite{Crowell-2019, enting1995synthesis,gurney2002towards}.
In this problem, estimates of net surface-atmosphere CO$_2$ fluxes are inferred from atmospheric CO$_2$ measurements, with fluxes and atmospheric measurements being related by a chemical transport model (the computational forward model). However, the relatively sparse atmospheric CO$_2$ observations underconstrain surface fluxes of CO$_2$, and regularization with prior information is the Bayesian approach to making the problem well-posed. These analyses have historically assimilated measurements of atmospheric CO$_2$ from a global network of flask and in situ measurements \cite{enting1995synthesis}, but more recent work \cite{Byrne-2023,Crowell-2019, deng_2014,Houweling-2015} has shifted to assimilating space-based column-averaged dry-air mole fractions, denoted $\mathrm{X_{CO_2}}$, as observations availability has expanded since 2009. In these analyses, the prior and error distributions are typically assumed to be Gaussian and the forward model can be reasonably assumed linear in the net surface-atmosphere fluxes.

When the number of model parameters is low and a forward model run is inexpensive, it is possible to explicitly construct the posterior covariance matrix. Successful examples of this approach date back at least to Vasco et al.\ \cite{vasco} in seismic tomography, where inversion is performed on $12{,}496$ model parameters. However, more contemporary problems typically have orders of magnitude more parameters and substantially more expensive forward models, requiring other approaches to access posterior covariance matrix information. Once the discretization of the computational model is set, the dimensionality problem can be handled either by defining an approximate statistical model on a lower dimensional problem, or by working in some subspace of the full-dimensional problem. A recent example of the first strategy is seen in Zammit-Mangion et al.\ \cite{wombat} in the WOMBAT inversion system which lowers the dimension of the statistical model via an intelligently chosen set of basis functions, facilitating MCMC. Alternatively, Petra et al.\ \cite{petra_mcmc} propose with Stochastic Newton MCMC (SN-MCMC) the possibility for MCMC in the full parameter space by using a low-rank approximation to the posterior covariance within the proposal distribution of a Metropolis--Hasting algorithm. Although WOMBAT and SN-MCMC are both MCMC-based, WOMBAT assumes a linear forward model, while SN-MCMC does not, allowing it to characterize non-Gaussian posteriors. Staying with a linear forward model assumption, other approaches leverage low-rank posterior covariance approximations. Flath et al.\ \cite{flath} develop a low-rank algorithm for approximating the posterior covariance by computing the leading eigenvalues and eigenvectors of a prior-conditioned Hessian matrix of the associated objective function (i.e., the log posterior). In a similar spirit, Kalmikov and Heimbach \cite{kalmikov} provide a derivative-based algorithm to compute leading Hessian eigenvalues and eigenvectors and extend the uncertainty quantification to quantities of interest in global ocean state estimation. The algorithms in both Flath et al.\ \cite{flath} and Kalmikov and Heimbach \cite{kalmikov} rely upon the Lanczos method \cite{lanczos} for matrix-free computation of the low-rank approximation. Alternatively, Bousserez and Henze \cite{bousserez_henze} more recently proposed a low-rank approximation algorithm dependent upon the randomized SVD algorithm \cite{halko}. All of the aforementioned methods can be grouped by their reliance upon some low-dimensional deterministic approximation.

In contrast, stochastic approximations of the posterior distribution rely neither upon pre-inversion dimension reductions nor low-rank matrix approximations, but rather generate ensembles of inversions using random generators. In carbon flux inversion, Chevallier et al.\ \cite{chevallier} developed such a method to estimate the posterior variance of \emph{functionals} of the flux field (i.e., maps from the flux field to the reals). The method uses the forward model, specified prior, and known observation error distributions in a particularly efficient manner. Broadly, the algorithm creates an ensemble of prior means and observation errors, sampling according to their respective distributions. For each ensemble member, it finds the maximum a posteriori (MAP) estimator, to which the functional is applied. Finally, it finds the empirical variance across the ensemble members to estimate the posterior variance of the functional. This method is well-suited for carbon flux estimation and DA UQ more generally for a few key reasons. First, each ensemble member is computationally independent, making the method parallelizable and hence offering a substantial computational benefit compared to sequential methods, such as MCMC. Second, although in general prior misspecification biases the posterior, the prior mean does not need to be correctly specified in order for the procedure to produce an unbiased estimator of the posterior variance. Third, the ensemble of inversions can flexibly produce UQ estimates for arbitrary functionals post hoc, as opposed to requiring the specification of a functional ahead of the analysis. Finally, since this method is more generally a Monte Carlo (MC) method for a Gaussian statistical model, the method's sampling uncertainty can be analytically characterized and accounted for in the final UQ estimate. The ability to easily characterize this uncertainty of the uncertainty stands in contrast to the difficulty in characterizing deterministic error of the aforementioned low-dimensional approaches.

Although Chevallier et al.\ \cite{chevallier} appear to have been the first to develop this method, which was later applied in Liu et al.\ \cite{liu_2016}, we are unaware of a formal statement or analysis of this algorithm. These previous works also did not quantify the algorithm's MC uncertainty. As such, the primary contributions of this paper are a rigorous formal statement of the algorithm, an analysis showing the convergence of its output to the true posterior quantity of interest, and uncertainty quantification of the algorithm itself so that the algorithm's sampling uncertainty can be accounted for in the final inference.

The rest of this paper is structured as follows. In Section~\ref{sec:method}, we fully describe the algorithm, present mathematical results proving its correctness, and derive deflation and inflation factors to apply to the estimated posterior uncertainty to quantify the MC uncertainty. Proofs of the mathematical results can be found in Appendix~\ref{sec:appendix_multivariate_algebra}. In Section~\ref{sec:experimental_examples}, we provide two experimental demonstrations: the first is a low-dimensional problem in which we explicitly know the linear forward model and the second is a carbon flux observing system simulation experiment (OSSE) to which we applied this method to compute global monthly flux credible intervals along their MC uncertainty. Finally, we provide some concluding remarks in Section~\ref{sec:conclusion}. For reference, all mathematical notation in order of appearance is collected in Table~\ref{table:math_symbols}.

\begin{table}[t]
    \caption{Mathematical symbols and notation used herein (in order of appearance).}
    \label{table:math_symbols}
    \centering
    \begin{adjustbox}{max width=\textwidth}{
    \begin{tabular}{l l | l l }
        \hline
        $\mathbf{c} \in \mathbb{R}^m$ & Scaling factors & $\tilde{\mathbf{y}} \in \mathbb{R}^n$ & X$_{CO_2}$ Observations \\
        $\bm{\mu} \in \mathbb{R}^m$ & Control fluxes & $f$ & Forward model \\
        $\bm{\epsilon} \in \mathbb{R}^n$ & Observation Noise & $\mathbf{R}$ & Observation Noise Covariance \\
        $\mathbf{A} \in \mathbb{R}^{n \times m}$ & Linear forward model & $\mathbf{a} \circ \mathbf{b}$ & Element-wise multiplication of $\mathbf{a}$ and $\mathbf{b}$ \\
        $\mathbf{z} \in \mathbb{R}^n$ & Non-biospheric X$_{CO_2}$ component & $\mathbf{y} \in \mathbb{R}^n$ & Biospheric component of X$_{CO_2}$ \\
        $\mathbf{c}^b \in \mathbb{R}^m$ & Prior scaling factor expectation & $\mathbf{B} \in \mathbb{R}^{m \times m}$ & Prior scaling factor covariance \\
        $\pi(\mathbf{c} \mid \mathbf{y})$ & Posterior scaling factor density & $b^2 \in \mathbb{R}_+$ & Prior variance parameter \\
        $\mathbf{I}_m \in \mathbb{R}^{m \times m}$ & $m \times m$ identity matrix & $\bm{\theta} = \mathbf{c} \circ \bm{\mu}$ & Flux vector \\
        $\mathbf{A}_{\bm{\mu}} \in \mathbb{R}^{n \times m}$ & Forward model with control flux $\bm{\mu}$ & $\mathbf{\Sigma} \in \mathbb{R}^{m \times m}$ & Posterior scaling factor covariance \\
        $\bm{\alpha} \in \mathbb{R}^m$ & Posterior scaling factor expectation & $\bm{\delta} \in \mathbb{R}^m$ & Posterior flux expectation \\
        $\bm{\Gamma} \in \mathbb{R}^{m \times m}$ & Posterior flux covariance & $(\mathbf{c}_k, \mathbf{y}_k) \in \mathbb{R}^m \times \mathbb{R}^n$ & $k$th MC sample \\
        $\bm{\Sigma}_{\mathbf{c}_{MAP}^k} \in \mathbb{R}^{m \times m}$ & MC MAP estimator covariance & $\mathbf{h} \in \mathbb{R}^m$ & Functional of interest \\
        $\bar{\varphi} \in \mathbb{R}$ & Mean MC functional value & $\hat{\sigma}^2_\varphi \in \mathbb{R}_+$ & Empirical functional variance \\
        $\chi^2_{M - 1}$ & Chi-squared distribution with $M - 1$ dof & $\chi_{M - 1, \alpha / 2}^2$ & Chi-squared $(\alpha / 2)$-quantile \\
        $\alpha \in (0, 1)$ & Frequentist confidence level & $\gamma \in (0, 1)$ & Bayesian credible interval level \\
        $L$ & Deflation factor for MC variance & $R$ & Inflation factor for MC variance \\
        \hline
    \end{tabular}}
    \end{adjustbox}
\end{table}

\section{Monte Carlo Method Exposition, Analysis, \\and Uncertainty Quantification} \label{sec:method}
\subsection{The Bayesian 4D-Var Setup} \label{subsec_bayes_setup}
Following along with the mathematical setup of Henze et al. \cite{Henze-2007}, the prior and posterior distributions are defined in a scaling factor space and hence the prior and posterior distributions on the physical quantity of interest are obtained by multiplying the respective scaling factor by a control quantity. In carbon flux estimation, the control quantity is a control flux, typically an ansatz CO$_2$ flux between the Earth's surface and the atmosphere. Note that if the prior distribution mean in scaling factor space is unity, then the control flux is also the prior mean in the physical quantity of interest space. Mathematically, fix the scaling factor vector $\mathbf{c} \in \mathbb{R}^m$ and let $\tilde{\mathbf{y}} \in \mathbb{R}^n$ be the observation vector and $\bm{\mu} \in \mathbb{R}^m$ the control physical quantity. The following model generally describes the relationship between the scaling factors and the observations:
\begin{equation}
    \tilde{\mathbf{y}} = f(\mathbf{c}; \bm{\mu}) + \bm{\epsilon}, \quad \bm{\epsilon} \sim \mathcal{N}(\bm{0}, \mathbf{R}),
\end{equation}
where $\mathbf{R} \in \mathbb{R}^{n \times n}$ is the observation covariance matrix. The observation vector $\tilde{\mathbf{y}}$ is a sequence of X$_{CO_2}$ observations produced from a remote sensing satellite, e.g., GOSAT or OCO-2 \cite{odell}. This model expression is a composition of an atmospheric transport model mapping scaling factors to atmospheric CO$_2$ concentrations composed with a remote sensing observation operator mapping CO$_2$ concentrations to X$_{CO_2}$ scalar values. The atmospheric transport model is known to be affine due to the physics of CO$_2$ atmospheric transport. In reality, the true mapping from atmospheric CO$_2$ concentrations to X$_{CO_2}$ is non-linear, but in line with \cite{liu_2016}, we use an affine form involving the known GOSAT averaging kernel. As such, the affine composed function $f$ is of the form $f(\mathbf{c}; \bm{\mu}) = \mathbf{A} (\mathbf{c} \circ \bm{\mu}) + \mathbf{z}$, where $\mathbf{A} \in \mathbb{R}^{n \times m}$ is the linear forward model matrix, $\mathbf{c} \circ \bm{\mu}$ denotes the component-wise scaling of $\bm{\mu}$ by the scaling factors $\mathbf{c}$, and $\mathbf{z}$ is comprised of the non-biospheric CO$_2$ contribution to the observations along with the prior mean of the X$_{CO_2}$ retrieval algorithm. As such, we define $\mathbf{y} := \tilde{\mathbf{y}} - \mathbf{z}$, giving the linear model
\begin{equation} \label{eq:linear_forward_model}
    \mathbf{y} = \mathbf{A} (\mathbf{c} \circ \bm{\mu}) + \bm{\epsilon}, \quad \bm{\epsilon} \sim \mathcal{N}(\bm{0}, \mathbf{R}),
\end{equation}
on which the following analysis is performed. For ease of reference, we herein refer to the linear map $\mathbf{A}$ as the forward model. We emphasize that the matrix $\mathbf{A}$ is not explicitly available to us, but is implicitly defined by the atmospheric transport model (and the satellite observation operator, which is usually explicitly available).

To regularize the problem and provide uncertainty quantification on the estimated scaling factors and fluxes, $\mathbf{c}$ in Equation~\eqref{eq:linear_forward_model} is given a Gaussian prior distribution, yielding the following Bayesian generative model:
\begin{align}
    \mathbf{c} &\sim \mathcal{N}(\mathbf{c}^b, \mathbf{B}) \label{eq:prior_c}, \\
    \mathbf{y} \mid \mathbf{c} &\sim \mathcal{N}(\mathbf{A} (\mathbf{c} \circ \bm{\mu}), \mathbf{R}) \label{eq:like_y},
\end{align}
where $\mathbf{c}^b \in \mathbb{R}^m$ is the scaling factor prior mean and $\mathbf{B}$ is the prior covariance matrix.

Finding the posterior mean (or, equivalently, the posterior mode) defined by Equations~\eqref{eq:prior_c} and \eqref{eq:like_y} characterizes a common DA problem of interest. A typical DA approach in carbon flux estimation is four dimensional variational data assimilation (4D-Var) \cite{deng_2014, liu_2016}, a method optimizing carbon fluxes simultaneously over all time steps. 4D-Var can be regarded as a least-squares optimization with an $\ell_2$ regularizer (ridge regression), or, equivalently, as maximum a posteriori (MAP) estimation in the Bayesian paradigm. This connection means that the 4D-Var optimization is connected to the posterior resulting from the prior and likelihood in Equations \eqref{eq:prior_c} and \eqref{eq:like_y}. From the Bayesian perspective, the 4D-Var cost function $F(\mathbf{c})$ is the negative log-posterior density of the scaling factors given the observations,
\begin{equation} \label{eq:log_posterior}
\begin{split}
    F(\mathbf{c}) = -\log (\pi(\mathbf{c} \mid \mathbf{y})) = \frac{1}{2} \left( \mathbf{c} - \mathbf{c}^b \right)^{\top} \mathbf{B}^{-1} \left( \mathbf{c} - \mathbf{c}^b \right) \\
    + \frac{1}{2} \left( \mathbf{y} - \mathbf{A}(\mathbf{c} \circ \bm{\mu}) \right)^{\top} \mathbf{R}^{-1} \left( \mathbf{y} - \mathbf{A}(\mathbf{c} \circ \bm{\mu}) \right) + C,
\end{split}
\end{equation}
where $C \in \mathbb{R}$ is a normalizing constant for the posterior distribution and $\pi(\mathbf{c} \mid \mathbf{y})$ denotes the posterior density. Thus, finding the MAP estimator, i.e., the $\mathbf{c}$ that maximizes the posterior density, is equivalent to finding the vector $\mathbf{c}$ that minimizes the 4D-Var cost function.

In this study (Sect.~\ref{sec:osse_example}), the prior covariance $\mathbf{B}$ is parameterized with a single real value, $\mathbf{B} := b^2 \mathbf{I}_m$, where $b \in \mathbb{R}$. This is inline with several published studies \cite{deng_2014, liu_2016}, and implies that all prior spatio-temporal indices are statistically independent. Similarly, the noise covariance $\mathbf{R}$ is assumed to be a diagonal matrix where each diagonal element is simply the variance of the corresponding X$_{CO_2}$ observation. As such, each diagonal element depends on the uncertainty of its corresponding X$_{CO_2}$ retrieval and the observations are assumed statistically independent given the scaling factors.

If the forward model $\mathbf{A}$ is known explicitly, the posterior mean and covariance of $\mathbf{c}$ are analytically tractable. We can then find the posterior uncertainty of the physical quantity $\bm{\theta} = \mathbf{c} \circ \bm{\mu}$ given the observations. We rewrite Equation \eqref{eq:like_y} using the short-hand notation for the $\circ$ operation (see Appendix~\ref{sec:appendix_multivariate_algebra}), i.e., $\mathbf{A} (\mathbf{c} \circ \bm{\mu}) = \mathbf{A}_{\bm{\mu}}\mathbf{c}$:
\begin{equation} \label{eq:like_y_circ}
    \mathbf{y} \mid \mathbf{c} \sim \mathcal{N}(\mathbf{A}_{\bm{\mu}} \mathbf{c}, \mathbf{R}).
\end{equation}
Hence, $\mathbf{c} \mid \mathbf{y} \sim \mathcal{N}(\bm{\alpha}, \bm{\Sigma})$, where by the posterior mean and covariance Equations 4.3 and 4.7 in \cite{rodgers} we have:
\begin{align}
    \bm{\Sigma} &= \left( \frac{1}{b^2} \mathbf{I}_m +  \mathbf{A}_{\bm{\mu}}^{\top} \mathbf{R}^{-1} \mathbf{A}_{\bm{\mu}} \right)^{-1} = \left( \left(\mathbf{A}^{\top} \mathbf{R}^{-1} \mathbf{A} \right) \circ \bm{\mu} \bm{\mu}^{\top} + \frac{1}{b^2} \mathbf{I}_m \right)^{-1}, \label{eq:sf_posterior_cov} \\
    \bm{\alpha} &= \bm{\Sigma} \left( \mathbf{A}_{\bm{\mu}}^{\top} \mathbf{R}^{-1} \mathbf{y} + \frac{1}{b^2} \mathbf{c}^b \right) = \bm{\Sigma} \left( \left( \mathbf{A}^{\top} \mathbf{R}^{-1} \mathbf{y} \right) \circ \bm{\mu} + \frac{1}{b^2} \mathbf{c}^b \right), \label{eq:MAP_est_sf}
\end{align}
where Equation~\eqref{eq:sf_posterior_cov} follows from Corollary~\ref{cor:psd} and Equation~\eqref{eq:MAP_est_sf} follows from Lemma \ref{lem:circ_through_matmul}. Note, $\bm{\alpha}$ is also the MAP estimator of $\mathbf{c}$. Furthermore, the posterior distribution for the physical quantity, $\bm{\theta} = \mathbf{c} \circ \bm{\mu}$, is $\bm{\theta} \mid \mathbf{y} \sim \mathcal{N}(\bm{\delta}, \bm{\Gamma})$, where
\begin{align}
    \bm{\delta} &= \mathbb{E}[\bm{\theta} \mid \mathbf{y}] = \mathbb{E}[\mathbf{c} \circ \bm{\mu} \mid \mathbf{y}] = \mathbb{E}[\mathbf{c} \mid \mathbf{y}] \circ \bm{\mu} = \bm{\alpha} \circ \bm{\mu}, \label{eq:flux_posterior_mean} \\
    \bm{\Gamma} &= \text{Cov}[ \bm{\theta} \mid \mathbf{y}] = \text{Cov}[\mathbf{c} \circ \bm{\mu} \mid \mathbf{y}] = \text{Cov}[\mathbf{c} \mid \mathbf{y}] \circ \bm{\mu} \bm{\mu}^{\top} = \bm{\Sigma} \circ \bm{\mu} \bm{\mu}^{\top}, \label{eq:flux_posterior_cov}
\end{align}
where we have used Lemmas~\ref{lemma:E_circ} and \ref{lemma:cov_circ} from Appendix~\ref{sec:appendix_multivariate_algebra}.

In practice, the forward model $\mathbf{A}$ is only known implicitly via a computer simulator, making direct use of Equations~\eqref{eq:sf_posterior_cov}--\eqref{eq:flux_posterior_cov} intractable. Instead, the 4D-Var cost function in Equation~\eqref{eq:log_posterior} is minimized using the L-BFGS-B algorithm \cite{byrd_lbfgsb} with the cost function gradient computed numerically using the adjoint method \cite{Henze-2007}. After a handful of iterations, L-BFGS-B finds a reasonable approximation of the posterior mean~/~mode $\bm{\alpha}$, which yields a point estimator of $\bm{\theta}$ using Equation~\eqref{eq:flux_posterior_mean}. We now describe a procedure that provides an approach for uncertainty quantification of $\bm{\theta}$ despite the intractability of the posterior covariance $\bm{\Sigma}$.

\subsection{The Monte Carlo Procedure} \label{subsec_monte_carlo_proc}
To execute the Monte Carlo procedure introduced in \cite{chevallier}, we generate $M$ ensemble members. For each $k = 1, 2, \dots, M$, we sample a new prior mean $\mathbf{c}_k$ and new observation $\mathbf{y}_k$ as follows:
\begin{align}
    \mathbf{c}_k &\overset{\text{i.i.d.}}{\sim} \mathcal{N}(\mathbf{1}, b^2 \mathbf{I}_m) \label{eq:mc_prior}, \\
    \mathbf{y}_k &\overset{\text{i.i.d.}}{\sim} \mathcal{N}(\mathbf{A} \bm{\mu}, \mathbf{R}), \label{eq:mc_likelihood}
\end{align}
where $b^2$ is the prior uncertainty mentioned in Section~\ref{subsec_bayes_setup}. Notice that $\mathbf{A}\bm{\mu}$ is known after a single forward model run, and hence Equation~\eqref{eq:mc_likelihood} is more illuminatingly seen as sampling Gaussian noise for each Monte Carlo sample: letting $\bm{\epsilon}_k \sim \mathcal{N}(\bm{0}, \mathbf{R})$, for $k = 1, \dots, M$, each Monte Carlo iteration involves sampling a pair, $(\mathbf{c}_k, \mathbf{y}_k ) \in \mathbb{R}^{m} \times \mathbb{R}^{n}$, where $\mathbf{y}_k := \mathbf{A} \bm{\mu} + \bm{\epsilon}_k$ for each $k = 1, \dots, M$.

The MAP estimator from Equation~\eqref{eq:MAP_est_sf} corresponding to prior mean $\mathbf{c}_k$ and observation $\mathbf{y}_k$ is analytically tractable for each ensemble member when $\mathbf{A}$ is explicitly known:
\begin{equation}
    \mathbf{c}_{MAP}^k = \bm{\Sigma} \left( \left( \mathbf{A}^{\top} \mathbf{R}^{-1} \mathbf{y}_k \right) \circ \bm{\mu} + \frac{1}{b^2} \mathbf{c}_k \right). \label{eq:sf_map_est}
\end{equation}
\noindent Similarly, the covariance matrix of this MAP estimator, henceforth denoted as $\bm{\Sigma}_{\mathbf{c}_{MAP}^k}$, is
\begin{align}
    \bm{\Sigma}_{\mathbf{c}_{MAP}^k} &= \bm{\Sigma} \: \text{Cov} \left[\mathbf{A}^{\top} \mathbf{R}^{-1} \mathbf{y}_k \circ \bm{\mu} + \frac{1}{b^2} \mathbf{c}_k \right] \bm{\Sigma}^\top \label{eq:map_covariance} \\
    &= \bm{\Sigma} \left(\text{Cov} \left[ \mathbf{A}^{\top} \mathbf{R}^{-1} \mathbf{y}_k \circ \bm{\mu} \right] + \frac{b^2}{b^4} \mathbf{I}_m \right) \bm{\Sigma} \nonumber \\
    &= \bm{\Sigma} \left(\text{Cov} \left[ \mathbf{A}^{\top} \mathbf{R}^{-1} \mathbf{y}_k \right] \circ \bm{\mu} \bm{\mu}^{\top} + \frac{1}{b^2} \mathbf{I}_m \right) \bm{\Sigma} \nonumber \\
    &= \bm{\Sigma} \left( \left( \mathbf{A}^{\top} \mathbf{R}^{-1} \text{Cov} \left[ \mathbf{y}_k \right] \mathbf{R}^{-1} \mathbf{A} \right) \circ \bm{\mu} \bm{\mu}^{\top} + \frac{1}{b^2} \mathbf{I}_m \right) \bm{\Sigma} \nonumber \\
    &= \bm{\Sigma} \left( \left( \mathbf{A}^{\top} \mathbf{R}^{-1} \mathbf{A} \right) \circ \bm{\mu} \bm{\mu}^{\top} + \frac{1}{b^2} \mathbf{I}_m \right) \bm{\Sigma} = \bm{\Sigma}, \nonumber  \label{eq:map_est_cov}
\end{align}
since $\bm{\Sigma}^{-1} = \left( \mathbf{A}^{\top} \mathbf{R}^{-1} \mathbf{A} \circ \bm{\mu} \bm{\mu}^{\top} + \frac{1}{b^2} \mathbf{I}_m \right)$. This shows that the covariance matrix of the Monte Carlo ensemble of scaling factor MAP estimators is equal to the desired posterior covariance $\bm{\Sigma}$. Note, the $\circ$ operation step on the third line of derivation~\eqref{eq:map_covariance} follows from Lemma~\ref{lemma:cov_circ} in Appendix~\ref{sec:appendix_multivariate_algebra}. To the best of our knowledge, proof of this equality has not appeared in previous literature on this method.

This covariance equality also exists in the physical quantity space, e.g., carbon flux space. The estimator of the physical quantity corresponding to $\mathbf{c}_{MAP}^k$ is
\begin{equation}
    \bm{\theta}_k = \mathbf{c}_{MAP}^k \circ \bm{\mu}.
\end{equation}
Using the result from Lemma~\ref{lemma:cov_circ} in Appendix~\ref{sec:appendix_multivariate_algebra}, the covariance matrix $\text{Cov}[\bm{\theta}_k]$ of this estimator is
\begin{equation}
    \text{Cov}[\bm{\theta}_k] = \text{Cov}[\mathbf{c}_{MAP}^k \circ \bm{\mu}] = \text{Cov}[\mathbf{c}_{MAP}^k] \circ \bm{\mu} \bm{\mu}^{\top} = \bm{\Sigma} \circ \bm{\mu} \bm{\mu}^{\top} = \bm{\Gamma}. \label{eq:mc_cov_phys_quant}
\end{equation}
Hence, the covariance matrix of the Monte Carlo physical quantity estimator is equal to the posterior covariance matrix of that physical quantity.

However, for most DA tasks the forward model is not explicitly available, so each ensemble member MAP estimator $\mathbf{c}_{MAP}^k$ must be obtained with an iterative optimization algorithm minimizing \eqref{eq:log_posterior} with $\mathbf{c}_k$ and $\mathbf{y}_k$ as the prior mean and observation vectors. Once these ensemble members are obtained, we could in principle estimate the posterior scaling factor covariance matrix $\bm{\Sigma}$ with the empirical covariance estimator $\widehat{\bm{\Sigma}}$ based on the Monte Carlo ensemble as follows,
\begin{equation} \label{eq:empirical_cov}
    \widehat{\bm{\Sigma}} = \frac{1}{M - 1} \sum_{k = 1}^M \left(\mathbf{c}^k_{MAP} - \bar{\mathbf{c}} \right)\left(\mathbf{c}^k_{MAP} - \bar{\mathbf{c}} \right)^{\top},
\end{equation}
where $\bar{\mathbf{c}} = \frac{1}{M} \sum_{k = 1}^M \mathbf{c}^k_{MAP}$. To translate Equation~\eqref{eq:empirical_cov} to the physical quantity space, we simply plug in the empirical covariance estimator to Equation~\eqref{eq:mc_cov_phys_quant}, i.e.,
\begin{equation}
    \hat{\bm{\Gamma}} = \widehat{\text{Cov}[\bm{\theta}_k]} = \widehat{\bm{\Sigma}} \circ \bm{\mu} \bm{\mu}^\top.
\end{equation}

In practice, DA scenarios like carbon flux inversion are typically high dimensional, making direct interaction with these covariance matrices difficult. Indeed, accurate estimation of $\bm{\Sigma}$ using Equation~\eqref{eq:empirical_cov} would require an enormously large Monte Carlo ensemble and would require storing and working with an $m \times m$ matrix, where $m \sim 10^5$ or larger. Fortunately, we often care about the variance of one-dimensional summaries of $\bm{\theta}$, such as the posterior flux variance for a specific region during some time period, as opposed to the full posterior covariance matrix. For instance, we might wish to estimate North American fluxes over some month. For the remaining presentation of these ideas, we only discuss the posterior scaling factor vector, $\mathbf{c} \mid \mathbf{y}$, but as we have shown above, obtaining the posterior flux is achieved by a simple component-wise scaling with the control flux $\bm{\mu}$.

Obtaining quantities of the above type is mathematically implemented using a linear functional of the underlying high-dimensional parameter. That is, we wish to characterize the posterior of $\varphi(\mathbf{c}) = \bm{h}^{\top} \mathbf{c}$, where $\bm{h} \in \mathbb{R}^m$ contains weights necessary to aggregate the desired scaling factors. Hence, building off Equations~\eqref{eq:MAP_est_sf} and \eqref{eq:sf_posterior_cov}, we obtain the posterior distribution for the functional of interest:
\begin{equation}
    \varphi(\mathbf{c}) \mid \mathbf{y} \sim \mathcal{N}(\bm{h}^{\top} \bm{\alpha}, \bm{h}^{\top} \bm{\Sigma} \bm{h}).
\end{equation}
We wish to obtain the posterior variance of this functional. Define $\sigma^2_{\varphi} = \text{Var}(\varphi(\mathbf{c}) \mid \mathbf{y}) = \bm{h}^{\top} \bm{\Sigma} \bm{h}$. We could inefficiently estimate this using $\hat{\sigma}^2_{\varphi} = \bm{h}^{\top} \hat{\bm{\Sigma}} \bm{h}$, but we wish to avoid working directly with the full empirical covariance matrix. The following algebraic steps provide a better alternative:
\begin{align}
    \hat{\sigma}^2_{\varphi} &= \bm{h}^{\top} \left( \frac{1}{M - 1} \sum_{k = 1}^M \left(\mathbf{c}^k_{MAP} - \bar{\mathbf{c}} \right)\left(\mathbf{c}^k_{MAP} - \bar{\mathbf{c}} \right)^{\top} \right) \bm{h} \\
    &= \frac{1}{M - 1} \sum_{k = 1}^M \bm{h}^{\top} \left(\mathbf{c}^k_{MAP} - \bar{\mathbf{c}} \right)\left(\mathbf{c}^k_{MAP} - \bar{\mathbf{c}} \right)^{\top} \bm{h} \\
    &= \frac{1}{M - 1} \sum_{k = 1}^M \left[ \bm{h}^{\top} \left(\mathbf{c}^k_{MAP} - \bar{\mathbf{c}} \right) \right]^2 \\
    &= \frac{1}{M - 1} \sum_{k = 1}^M \left( \varphi_k - \bar{\varphi} \right)^2, \label{eq:post_func_variance}
\end{align}
where $\varphi_k = \bm{h}^{\top} \mathbf{c}^k_{MAP}$ and $\bar{\varphi} = \bm{h}^{\top} \bar{\mathbf{c}} = \frac{1}{M} \sum_{k = 1}^M \varphi_k$. The above algebra shows that the posterior variance of the functional can be computed using the functionals of the Monte Carlo samples without having to form the full empirical covariance matrix. See Algorithm~\ref{alg:posterior_covariance} for a succinct exposition of the above procedure. Note, the control flux can be built into the definition of $\bm{h}$ so that the functional has the desired units. Notice also that the functional does not need to be specified when creating the Monte Carlo ensemble. As long as the ensemble $\{\mathbf{c}_{MAP}^k\}_{k=1}^M$ is stored and made available to the end users, they may evaluate post-hoc the uncertainty of any functional that is of interest in their specific use-case.

\begin{algorithm}[t]
\caption{Monte Carlo Algorithm to Estimate Posterior Uncertainty in 4D-Var Data Assimilation}
    \medskip
    \textbf{Inputs}:
    \begin{itemize}
        \item $M \in \mathbb{N}$: Number of Monte Carlo samples.
        \item $\mathbf{R} \in \mathbb{R}^{n \times n}$: Observation error covariance built from retrieval uncertainties.
        \item $b^2 \in \mathbb{R}_+$: Scaling factor prior variance.
        \item $\mathbf{A} \in \mathbb{R}^{n \times m}$: Forward model mapping fluxes to X$_{CO_2}$ satellite observations (note, this can be known either explicitly via a matrix or implicitly via a computer simulator).
        \item $\bm{\mu} \in \mathbb{R}^m$: Control flux.
        \item $\bm{h} \in \mathbb{R}^m$: Vector defining the functional of interest.
    \end{itemize}
    \textbf{Steps}:
    \begin{enumerate}
        \item Let $\mathcal{S}$ denote an array of length $M$ that will store the MAP estimators for each Monte Carlo sample.
        \item For $k = 1, \dots, M$:
        \begin{enumerate}
            \item Simulate $\mathbf{c}_k \sim \mathcal{N}(\mathbf{1}, b^2 \mathbf{I}_m)$.
            \item Simulate $\bm{\epsilon}_k \sim \mathcal{N}(\mathbf{0}, \mathbf{R})$ and construct $\mathbf{y}_k = \mathbf{A} \bm{\mu} + \bm{\epsilon}_k$.
            \item Find MAP estimator $\mathbf{c}^k_{MAP}$. If $\mathbf{A}$ is known explicitly, it can be found using Equation~\eqref{eq:sf_map_est}. If $\mathbf{A}$ is known implicitly through a computational model, use a numerical optimizater (e.g., L-BFGS-B) to optimize a the 4D-Var cost function as defined in Equation~\eqref{eq:log_posterior}.
            \item $\mathcal{S}[k] \leftarrow \mathbf{c}^k_{MAP}$.
        \end{enumerate}
        \item Estimate Posterior Functional Variance:
        \begin{enumerate}
            \item Compute the mean Monte Carlo sample functional: $\bar{\varphi} = \frac{1}{M} \sum_{k = 1}^M \varphi_k$, where $\varphi_k = \bm{h}^\top \mathbf{c}^k_{MAP}$.
            \item Compute the empirical posterior functional variance:
            \begin{equation*}
                \hat{\sigma}^2_{\varphi} = \frac{1}{M - 1} \sum_{k = 1}^M \left( \varphi_k - \bar{\varphi} \right)^2
            \end{equation*}. \label{mc_emp_var}
        \end{enumerate}
    \end{enumerate}
\label{alg:posterior_covariance}
\end{algorithm}

\subsection{Quantifying the Monte Carlo Uncertainty} \label{subsec_monte_carlo_uncert}
Although Section \ref{subsec_monte_carlo_proc} establishes the equality of the Monte Carlo MAP estimator ensemble member covariance to the posterior covariance (and therefore the equality of the Monte Carlo ensemble member functional variance to the posterior functional variance), we have not yet established that the empirical covariance matrix (and functional variance) converges in probability to the true posterior covariance matrix (and functional variance). There are consistency results showing the empirical covariance matrix converging in probability to the true covariance matrix (see, for instance, Chapter 6 in \cite{wainwright}). However, since this application is primarily concerned with linear functionals of the form $\varphi(\mathbf{c}) = \bm{h}^{\top} \mathbf{c}$ as described in Section~\ref{subsec_monte_carlo_proc}, we can appeal directly to the consistency of the sample variance as shown, for example, in Chapter 5 of Casella and Berger (2002) \cite{casella_berger}.

Additionally, using the above algorithm, we would like to know either the uncertainty of the variance estimate given the number of Monte Carlo samples, or the number of samples required to obtain a particular level of Monte Carlo uncertainty on the variance. In essence, we would like to quantify the uncertainty of our uncertainty. To do so, we take a frequentist approach and construct confidence intervals on $\hat{\sigma}_\varphi^2$. The confidence intervals can be constructed by recognizing that the ratio of the Monte Carlo functional empirical posterior variance to the true functional posterior variance scaled by $(M-1)$ follows a $\chi^2_{M -1}$ distribution.

Since each sampled $\mathbf{c}^k_{MAP}$ is a linear function of other Gaussian samples (see Equation~\eqref{eq:sf_map_est}), $\mathbf{c}^k_{MAP}$ is also Gaussian, and thus the random variables $\varphi_k = \bm{h}^{\top} \mathbf{c}^k_{MAP}$ ($k = 1, \dots, M)$ are sampled independently and identically from a Gaussian distribution with some mean and variance $\sigma^2_{\varphi} = \bm{h}^\top \bm{\Sigma} \bm{h}$. By Theorem 5.3.1 of Casella and Berger (2002) \cite{casella_berger}, we have the following distributional result,
\begin{equation} \label{eq:samp_var_distrib}
    \frac{(M - 1) \hat{\sigma}^2_{\varphi}}{\sigma^2_{\varphi}} \sim \chi^2_{M - 1}.
\end{equation}
Thus, for $\alpha \in (0, 1)$, the distribution in Equation~\eqref{eq:samp_var_distrib} enables creating a $1 - \alpha$ confidence interval for the true posterior variance, $\sigma^2_{\varphi}$, as a function of the empirical posterior variance, $\hat{\sigma}^2_\varphi$. Using the exact distribution in Equation~\eqref{eq:samp_var_distrib}, we can create either one- or two-sided confidence intervals. Focusing on the two-sided case, we have,
\begin{equation} \label{eq:pivot_prob}
    \mathbb{P} \left\{ \chi^2_{M - 1, \alpha / 2} \leq \frac{(M - 1) \hat{\sigma}^2_{\varphi}}{\sigma^2_{\varphi}} \leq \chi^2_{M - 1, 1 - \alpha / 2} \right\} = 1 - \alpha,
\end{equation}
where $\chi^2_{M - 1, \alpha / 2}$ is the $\alpha / 2$-quantile of a chi-squared distribution with $M - 1$ degrees of freedom. Hence, with some algebraic manipulation we arrive at the confidence interval of the posterior variance,
\begin{equation} \label{eq:confid_int}
    \mathbb{P} \left\{\frac{(M - 1) \hat{\sigma}^2_{\varphi}}{\chi^2_{M - 1, 1 - \alpha / 2}} \leq \sigma^2_{\varphi} \leq  \frac{(M - 1) \hat{\sigma}^2_{\varphi}}{\chi^2_{M - 1, \alpha / 2}} \right\} = 1 - \alpha.
\end{equation}
Since in practice we would like to characterize uncertainty in the same units as the flux estimate, we can provide an analogous confidence interval for the posterior standard deviation by taking square roots of all the terms within the probability statement in Equation~\eqref{eq:confid_int}, giving
\begin{equation} \label{eq:confid_int_std}
   \mathbb{P} \left\{ \hat{\sigma}_\varphi \sqrt{\frac{M - 1}{\chi^2_{M - 1, 1 - \alpha / 2}}} \leq \sigma_{\varphi} \leq \hat{\sigma}_\varphi \sqrt{\frac{M - 1}{\chi^2_{M - 1, \alpha / 2}}} \right\} = 1 - \alpha.
\end{equation}
Equation~\eqref{eq:confid_int_std} facilitates the computation of a $(1 - \alpha) \times 100 \%$ frequentist interval estimator of the Bayesian credible interval for the functional of interest $\varphi$. For each endpoint of the true Bayesian credible interval, we find a confidence interval such that the probability that both endpoint confidence intervals simultaneously cover the true credible interval endpoints is $1 - \alpha$. Let $\gamma \in (0, 1)$ and $\varphi_{MAP} = \bm{h}^\top \mathbf{c}_{MAP}$ be the functional MAP estimator as described in Section~\ref{subsec_bayes_setup}. Because the posterior is Gaussian and $\varphi_{MAP}$ is a linear functional, it is a one-dimensional Gaussian. Hence, the Bayesian $(1 - \gamma) \times 100 \%$ credible interval is computed as follows, 
\begin{equation} \label{eq:bayes_post_int}
    \left[\barbelow{\varphi}^*, \bar{\varphi}^* \right] = \left[ \varphi_{MAP} - z_{1 - \gamma / 2} \cdot \sigma_\varphi, \varphi_{MAP} + z_{1 - \gamma / 2} \cdot \sigma_\varphi \right],
\end{equation}
where $z_{1 - \gamma / 2}$ is the $1 - \gamma / 2$ quantile of a standard Gaussian distribution. Equation~\eqref{eq:confid_int_std} allows us to construct the aforementioned endpoint confidence intervals as follows. For readability, define $L^2 := \frac{M - 1}{\chi^2_{M - 1, 1 - \alpha / 2}}$ and $R^2 := \frac{M - 1}{\chi^2_{M - 1, \alpha / 2}}$. Thus, we have the following,
\begin{align} \label{eq:confid_ints_cred_int}
   1 - \alpha &= \mathbb{P} \left\{ z_{1 - \gamma / 2} \hat{\sigma}_\varphi L \leq z_{1 - \gamma / 2} \sigma_{\varphi} \leq z_{1 - \gamma / 2} \hat{\sigma}_\varphi R \right\} \nonumber \\
   &= \mathbb{P} \{ - z_{1 - \gamma / 2} \hat{\sigma}_\varphi R \leq - z_{1 - \gamma / 2} \sigma_{\varphi} \leq - z_{1 - \gamma / 2} \hat{\sigma}_\varphi L \; \; \text{and} \nonumber  \\
   & \quad \quad \quad z_{1 - \gamma / 2} \hat{\sigma}_\varphi L \leq z_{1 - \gamma / 2} \sigma_{\varphi} \leq z_{1 - \gamma / 2} \hat{\sigma}_\varphi R \} \nonumber  \\
   &= \mathbb{P} \{\varphi_{MAP} - z_{1 - \gamma / 2} \hat{\sigma}_\varphi R \leq \varphi_{MAP} - z_{1 - \gamma / 2} \sigma_{\varphi} \leq \varphi_{MAP} - z_{1 - \gamma / 2} \hat{\sigma}_\varphi L \; \; \text{and} \nonumber  \\
   & \quad \quad \quad \varphi_{MAP} + z_{1 - \gamma / 2} \hat{\sigma}_\varphi L \leq \varphi_{MAP} + z_{1 - \gamma / 2} \sigma_{\varphi} \leq \varphi_{MAP} + z_{1 - \gamma / 2} \hat{\sigma}_\varphi R \} \nonumber  \\
   &= \mathbb{P} \{\varphi_{MAP} - z_{1 - \gamma / 2} \hat{\sigma}_\varphi R \leq \barbelow{\varphi}^* \leq \varphi_{MAP} - z_{1 - \gamma / 2} \hat{\sigma}_\varphi L \; \; \text{and} \nonumber  \\
   & \quad \quad \quad \varphi_{MAP} + z_{1 - \gamma / 2} \hat{\sigma}_\varphi L \leq \bar{\varphi}^* \leq \varphi_{MAP} + z_{1 - \gamma / 2} \hat{\sigma}_\varphi R \}. \nonumber 
\end{align}
More concisely, defining
\begin{align}
    \barbelow{I} &:= \left[\varphi_{MAP} - z_{1 - \gamma / 2} \hat{\sigma}_\varphi R, \varphi_{MAP} - z_{1 - \gamma / 2} \hat{\sigma}_\varphi L \right], \\
    \bar{I} &:= \left [\varphi_{MAP} + z_{1 - \gamma / 2} \hat{\sigma}_\varphi L, \varphi_{MAP} + z_{1 - \gamma / 2} \hat{\sigma}_\varphi R \right],
\end{align}
it follows that
\begin{equation}
    \mathbb{P} \left\{\barbelow{\varphi}^* \in \barbelow{I} \; \; \text{and} \; \; \bar{\varphi}^* \in \bar{I} \right\} = 1 - \alpha.
\end{equation}
The intervals $\barbelow{I}$ and $\bar{I}$ quantify uncertainty on uncertainty, and provide a rigorous probabilistic characterization of the Monte Carlo procedure's uncertainty.

In practice, the original Bayesian credible interval in Equation~\eqref{eq:bayes_post_int} can be modified to account for the Monte Carlo uncertainty. To obtain an \emph{upper bound} on the Bayesian credible interval, we apply an \emph{inflation} factor (defined above as $R$), and thus obtain the interval $\left[\barbelow{\varphi}_u, \bar{\varphi}_u \right] = \left[ \varphi_{MAP} - z_{1 - \gamma / 2} \hat{\sigma}_\varphi R, \varphi_{MAP} + z_{1 - \gamma / 2} \hat{\sigma}_\varphi R \right]$, such that $\mathbb{P} \left\{ \left[\barbelow{\varphi}_u, \bar{\varphi}_u \right]  \supset \left[\barbelow{\varphi}^*, \bar{\varphi}^* \right] \right\} = 1 - \alpha / 2$. This probability is $(1 - \alpha / 2)$ instead of $(1 - \alpha)$ since the probability in Equation~\eqref{eq:pivot_prob} evaluated with only the lower bound yields a probability of $(1 - \alpha / 2)$. Following the same steps as above, we obtain lower and upper bounds on the lower and upper endpoints of credible interval~\eqref{eq:bayes_post_int}, respectively, holding with probability exactly $(1 - \alpha / 2)$. Similarly, to obtain a \emph{lower bound} on the Bayesian credible interval, we apply a \emph{deflation} factor (defined above as $L$) and thus obtain the interval $\left[\barbelow{\varphi}_l, \bar{\varphi}_l \right] = \left[ \varphi_{MAP} - z_{1 - \gamma / 2} \hat{\sigma}_\varphi L, \varphi_{MAP} + z_{1 - \gamma / 2} \hat{\sigma}_\varphi L \right]$ such that $\mathbb{P} \left\{ \left[\barbelow{\varphi}_l, \bar{\varphi}_l \right]  \subset \left[\barbelow{\varphi}^*, \bar{\varphi}^* \right] \right\} = 1 - \alpha / 2$, holding with equality by the same logic as that used for the inflation factor.

Observing that the aforementioned inflation and deflation factors monotonically asymptote to one as the number of Monte Carlo samples $M$ gets large, the effect of the Monte Carlo procedure wanes as the number of samples grows. As shown in Table~\ref{table:ratio_vals}, the deflation factor monotonically approaches one from below as $M$ gets large while the inflation factor monotonically approaches one from above as $M$ gets large. As is characteristic of DA methods, each Monte Carlo iteration requires a non-trivial amount of computation, which practically restricts the number of Monte Carlo samples that can be obtained. As such, the inflated interval protects against underestimating the uncertainty, while the deflated interval provides a lower bound or ``best-case'' scenario for the uncertainty of the Bayesian procedure.

\begin{table}[ht]
    \centering
    \caption{Inflation and deflation factors for Monte Carlo (MC) estimated posterior standard deviation with $\alpha = 0.05$. When $M = 100$, by inflating the MC estimated posterior standard deviation by a factor of $1.1607$ (inflating by $16.07\%$), the extra uncertainty resulting from the MC procedure is accounted for with $97.5\%$ confidence. Similarly, when $M = 100$, deflating the MC estimated posterior standard deviation by a factor of $0.8785$ provides a lower bound on the true underlying Bayesian uncertainty with $97.5\%$ confidence. When considered simultaneously, the inflation and deflation factors bracket the true uncertainty with $95\%$ confidence.}
    \label{table:ratio_vals}
\begin{tabular}{l l l}
        \hline
        \# Monte Carlo samples, $M$ & Deflation: $L = \sqrt{\frac{M - 1}{\chi^2_{M - 1, 1 - \alpha / 2}}}$ & Inflation: $R = \sqrt{\frac{M - 1}{\chi^2_{M - 1, \alpha / 2}}}$ \\ \hline
        10 & 0.6987 & 1.7549 \\
        100 & 0.8785 & 1.1607 \\
        1,000 & 0.9580 & 1.0458 \\
        10,000 & 0.9863 & 1.0141 \\
        100,000 & 0.9956 & 1.0044 \\
        1,000,000 & 0.9986 & 1.0014 \\
        \hline
    \end{tabular}
\end{table}

\section{Numerical Examples} \label{sec:experimental_examples}
\subsection{Low-Dimensional Example} \label{subsec:low_dim_ex}
We construct a two-dimensional toy example to provide a numerical demonstration that this MC procedure computes a consistent estimate of the posterior covariance, and is numerically close in practice. We define a linear forward model with the following matrix:
\begin{equation} \label{eq:defA}
\mathbf{A} = 
\begin{bmatrix}
    1 - \epsilon & \epsilon \\
    \epsilon & 1 - \epsilon
\end{bmatrix},
\end{equation}
where $\epsilon > 0$. Let $\bm{\theta} \in \mathbb{R}^2$ be the true state of some physical quantity and $\bm{\mu} \in \mathbb{R}^2$ be the control state. We use the values in Table~\ref{table:parameter_settings} to demonstrate the agreement between the analytical equations for the Bayesian procedure and the Monte Carlo procedure in addition to showing their agreement with the empirical covariance computed from the Monte Carlo ensemble members.
\begin{table}[t]
    \centering
    \caption{Parameter settings for the low-dimensional example.}
    \begin{adjustbox}{max width=\textwidth,center}
    \begin{tabular}{l l l}
        \hline
        Parameter & Value & Description \\
        \hline \hline
        $\mathbf{\theta}$ & $\begin{bmatrix} 1 & 2 \end{bmatrix}^{\top}$ & True state of the system \\
        $\bm{\mu}$ & $\begin{bmatrix} \frac{1}{2} & 1 \end{bmatrix}^{\top}$ & Control state \\
        $n$ & $2$ & The number of observations in each ensemble member (also the dimension of $\mathbf{y}$ in Equation~\eqref{eq:like_y}) \\
        $M$ & $10^6$ & The number of MC ensemble members \\
        $b^2$ & $4$ & The prior variance for each element in the scaling factor vector \\
        $\epsilon$ & $0.05$ & Parameter of the matrix $\mathbf{A}$ defined in Equation~\eqref{eq:defA} \\
        $\sigma^2$ & $1$ & Observation error variance \\
        \hline
    \end{tabular}
    \end{adjustbox}
    \label{table:parameter_settings}
\end{table}
Using the Table~\ref{table:parameter_settings} parameters and Equation~\eqref{eq:sf_posterior_cov}, we obtain the following posterior covariance matrix:
\begin{equation}
\bm{\Sigma} = 
\begin{bmatrix}
    2.10838562 & -0.0867085 \\
    -0.0867085 & 0.8693668
\end{bmatrix}.
\end{equation}
\noindent For the analytical covariance of the MAP estimator, we obtain the following matrix using Equation~\eqref{eq:map_covariance}:
\begin{equation}
\bm{\Sigma}_{\mathbf{c}_{MAP}^k}  = 
\begin{bmatrix}
    2.10838562 & -0.0867085 \\
    -0.0867085 & 0.8693668
\end{bmatrix}.
\end{equation}
Indeed, these matrices are expected to be the same. Using simulated ensemble members, we obtain the following empirical covariance matrix using Equation~\eqref{eq:empirical_cov}:
\begin{equation}
\widehat{\bm{\Sigma}}  = 
\begin{bmatrix}
    2.09697751 & -0.08748562 \\
    -0.08748562 & 0.87053267
\end{bmatrix}.
\end{equation}
This empirical covariance matrix is numerically very close to the analytical matrices. Using Equation (6.12) from \cite{wainwright}, we can compute an upper bound on the relative error between the empirical covariance matrix and the true posterior covariance matrix with respect to the $\ell_2$ operator norm that holds with at least some desired probability. Concretely, with $n = 2$ and $M = 10^6$, this deviation bound implies that the relative error is at most $0.01784$ ($1.784\%$) with probability at least $95\%$. This tight error bound matches the above numerical closeness.

All of the above quantities exist in the scaling factor space. For the physical quantity space covariance matrix, we obtain the following from Equations \eqref{eq:flux_posterior_cov} and \eqref{eq:mc_cov_phys_quant}:
\begin{equation}
    \bm{\Gamma} = \begin{bmatrix}
    0.52709641 & -0.04335425 \\
    -0.04335425 & 0.8693668
\end{bmatrix}
\end{equation}
\begin{equation}
    \text{Cov}[\bm{\theta}_k] = \begin{bmatrix}
    0.52709641 & -0.04335425 \\
    -0.04335425 & 0.8693668
\end{bmatrix}
\end{equation}
\begin{equation}
\frac{1}{M - 1} \sum_{k=1}^M (\theta_k - \bar{\theta}) (\theta_k - \bar{\theta})^{\top}  = 
\begin{bmatrix}
   0.52424438 & -0.04374281 \\
    -0.04374281 & 0.87053267
\end{bmatrix}
\end{equation}
Again, we see that the agreement between the analytical forms and the empirical covariance matrix is very close. 

This toy example demonstrates the correctness of the Monte Carlo procedure in terms of the covariance matrix. However, as seen in Algorithm~\ref{alg:posterior_covariance}, we are actually interested in scenarios in which we are considering a functional of high-dimensional parameters, and hence a variance. Such an example is considered in the following section.

\subsection{Carbon Flux Inversion OSSE} \label{sec:osse_example}
We show here an example of this Monte Carlo procedure being used to compute posterior uncertainties for global carbon fluxes. We follow the flux inversion setup used by Byrne et al.\ \cite{byrne_gosat} (see Section 2.3 of that study). This setup uses the GEOS-Chem Adjoint model \cite{Henze-2007} to estimate scaling factors on a $4^{\circ} \times 5^{\circ}$ surface grid from January 2010 up to and including August. For each spatial point, there is one scaling factor parameter for each month, totaling $m = 72 \times 46 \times 8 = 26,492$ scaling factors parameters. This model is linear in terms of realistic fluxes (e.g., not including abnormally large negative fluxes), and hence amenable to this uncertainty quantification procedure. The OSSE defines ground-truth fluxes from the Joint UK Land Environment Simulator (JULES) \cite{clark2011,harper2018} and uses Net Ecosystem Exchange (NEE) fluxes from NOAA's CarbonTracker version CT2016 (\cite{Peters-2007}, with updates documented at \url{https://www.esrl.noaa.gov/gmd/ccgg/carbontracker/}) as the control fluxes. The satellite $X_{CO_2}$ observations for the assimilation are generated from the JULES fluxes by running a forward GEOS-Chem simulation and sampling the model with the GOSAT observational coverage and observation operator \cite{odell}.

The prior uncertainty, as described in Equation \eqref{eq:prior_c}, is set to $b = 1.5$ (where $\mathbf{B} := b^2 \mathbf{I}_M$). To perform the Monte Carlo procedure, we draw $M = 60$ ensemble members, as described in Sec.~\ref{subsec_monte_carlo_proc}. The $X_{CO_2}$ observation uncertainty $\bm{\Sigma}$ (a diagonal matrix, as the observations are assumed to be independent) comes directly from the GOSAT data product and varies between observations. For each ensemble member $k$, the output of the GEOS-Chem Adjoint optimization provides monthly scaling factor MAP estimators $\mathbf{c}_{MAP}^k$ according to the ensemble member inputs as described by Equation~\eqref{eq:sf_map_est}. Each ensemble MAP estimator is then multiplied by the control flux to obtain a MAP estimator in flux space.

The functionals of interest $\varphi$ are monthly global fluxes. The flux values on the $3$-hour $4^{\circ} \times 5^{\circ}$ spatial-temporal grid are mapped to a global monthly flux using a weighted average with weights proportional to the surface area of each grid cell and uniform time weighting. The global flux posterior variance is computed for each month by finding the empirical variance of the Monte Carlo global flux members, as shown in Equation~\eqref{eq:post_func_variance}. To get a sense of how the DA is reducing prior uncertainty, for each month, we compute a $\%$ uncertainty reduction as follows:
\begin{equation} \label{eq:uq_reduc}
    \text{\% Uncertainty Reduction} = 1 - \frac{\sigma_{posterior}}{\sigma_{prior}}.
\end{equation}
Since we do not precisely know the posterior standard deviation in Equation~\eqref{eq:uq_reduc}, we consider the reduction both in terms of the raw Monte Carlo point estimate of the posterior standard deviation and its inflated version (i.e., $R$ as defined in Section~\ref{subsec_monte_carlo_uncert}).

The left side of Figure~\ref{fig:global_flux_results_percent_reduc} shows the timeseries of global mean functionals and their credible intervals.
\begin{figure}[htp]
    \centering
    \includegraphics[width=\textwidth]{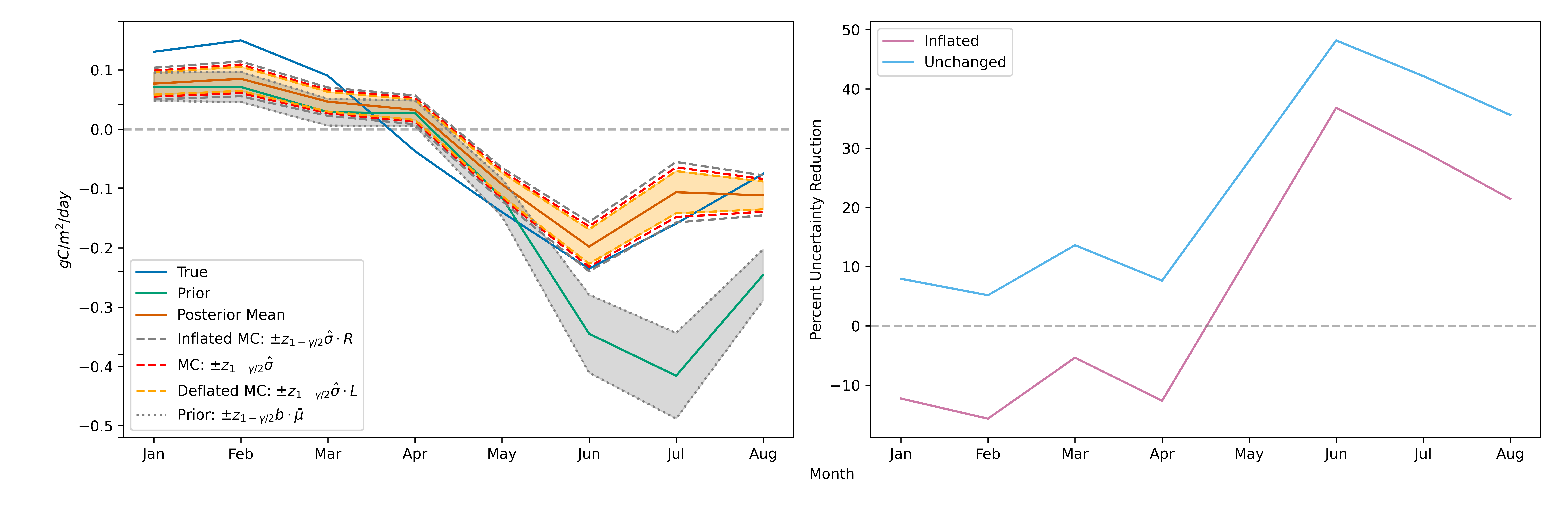}
    \caption{\textbf{(Left)} Estimated posterior $(1 - \gamma) \times 100\% = 95\%$ credible intervals around the monthly global flux functionals show markedly improved uncertainty over the prior during boreal summer months. The three interval types shown are the unchanged MC estimated intervals (red), the inflated MC estimated intervals (gray), and the deflated MC estimated intervals (orange). As described in Section~\ref{subsec_monte_carlo_uncert}, for each month, the true upper and lower credible interval endpoints are contained within the inflated and deflated endpoints with probability $1 - \alpha = 0.95$.  Note, $\hat{\sigma}$ is a shorthand notation for the empirical functional standard deviation as defined in Step~\eqref{mc_emp_var} of Algorithm~\ref{alg:posterior_covariance}, $\bar{\mu}$ is the globally averaged control flux to which the prior uncertainty, $b$, is applied for each month, and $R$ and $L$ are the inflation and deflation factors, respectively, as defined in Table~\ref{table:ratio_vals}. We observe that with even as few as $M = 60$ ensemble members, at the monthly/global scale, the magnitude of the Monte Carlo sampling uncertainty is small in comparison to the posterior uncertainty. \textbf{(Right)} Percent reduction in uncertainty from prior to posterior for the monthly global fluxes is most significant during the boreal summer. The light blue curve shows the percent reduction estimated with the unchanged MC estimated posterior standard deviation, while the magenta curve shows the percent reduction estimated with the inflated MC estimated posterior standard deviation. The true reduction is larger than the reduction shown by the latter curve with $97.5\%$ confidence.}
    \label{fig:global_flux_results_percent_reduc}
\end{figure}
The posterior flux is shown to have reduced error against the true flux, especially during the boreal summer months. Similarly, the Monte Carlo posterior uncertainty estimate shows considerable reduction relative to the prior. The uncertainty estimates with inflated endpoints, increase the posterior uncertainty by $22\%$ while the deflated endpoints decrease the posterior uncertainty by $15\%$, resulting in credible interval endpoint bounds that capture the true credible interval endpoints with $95\%$ probability. The right side of Figure~\ref{fig:global_flux_results_percent_reduc} further emphasizes the prior to posterior uncertainty reduction that we mathematically expect. However, we notice that the inflated uncertainty is only reduced during boreal summer months. In January, February, March, and April the inflated Monte Carlo estimated posterior uncertainty is actually larger than the prior uncertainty. There is a logical explanation for this: since most of the landmass generating NEE fluxes is in the Northern Hemisphere and GOSAT requires sunlight to measure $X_{CO_2}$, the satellite observations impose much weaker constraints on the fluxes during boreal winter. Furthermore, since the prior uncertainty is defined as a percentage, the prior is more concentrated during the boreal winter months when the absolute magnitude of the CarbonTracker fluxes is smaller. As a result of these two effects, the actual posterior uncertainty during the winter months is only slightly smaller than the prior uncertainty. Since we are obtaining a noisy Monte Carlo estimate of this uncertainty from using $60$ ensemble members, the inflated value accounting for the Monte Carlo uncertainty of the posterior uncertainty is slightly larger than the prior uncertainty.

\section{Conclusion} \label{sec:conclusion}
For Bayesian uncertainty quantification in which the forward model is only available as a simulator, the carbon flux estimation community has proposed a useful Monte Carlo method to compute posterior uncertainties. This method is especially well-suited to DA tasks since it is parallelizable, works with computationally intensive physical simulators, and allows for flexible post-hoc uncertainty quantification on any desired functional of the model parameters. In this note, we analytically established the mathematical correctness of this procedure in the case of Gaussian prior and error distributions and provided additional uncertainty quantification to account for the Monte Carlo sampling variability in the final estimated credible interval. We also provided two numerical examples. In the first, we demonstrated the agreement between the analytical equations and empirical results for an explicitly known linear forward model. In the second, we showed that this procedure applies to a large-scale DA problem in the form of a carbon flux inversion OSSE, and reasoned that the uncertainty quantification results are mathematically and practically sensible. Future investigations of this method could be based on an exploration of how many ensemble members must be sampled before the Monte Carlo uncertainty is sufficiently less than the posterior uncertainty.

\section*{Acknowledgments}

This work was supported by NSF grant DMS-2053804, JPL RSA No. 1670375 and 1689177 and a grant from the C3.AI Digital Transformation Institute. Part of this research was carried out at the Jet Propulsion Laboratory, California Institute of Technology, under a contract with the National Aeronautics and Space Administration (grant no. 80NM0018D004). Liu and Byrne would like to acknowledge the funding support from NASA Orbiting Carbon Observatory Science Team program (17-OCO2-17-0013). Finally, we would like to thank Anna Harper for providing the JULES fluxes used in this study's OSSE, the STAMPS research group at Carnegie Mellon University for supporting this work, and the Uncertainty Quantification group at the Jet Propulsion Laboratory for facilitating this collaboration. We would like to acknowledge high-performance computing support from Cheyenne (doi:10.5065/D6RX99HX) provided by NCAR's Computational and Information Systems Laboratory, sponsored by the National Science Foundation.

\appendix

\section{Supporting Algebraic Results} \label{sec:appendix_multivariate_algebra}
There are a few key properties of the element-wise multiplication operation that must be stated in order to support the derivation of the equations presented in this paper.

For the following, let $\mathbf{x} \in \mathbb{R}^m$ be a random vector such that $\mathbb{E}[\mathbf{x}] = \bm{\mu}$ and $\text{Cov}[\mathbf{x}] = \bm{\Sigma}$. Additionally, suppose $\mathbf{a} \in \mathbb{R}^m$ and $\mathbf{A} \in \mathbb{R}^{n \times m}$. 

\begin{lemma} \label{lemma:E_circ}
$\mathbb{E}[\mathbf{x} \circ \mathbf{a}] = \mathbb{E}[\mathbf{x}] \circ \mathbf{a}$\,.
\end{lemma}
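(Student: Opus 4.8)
The plan is to prove $\mathbb{E}[\mathbf{x} \circ \mathbf{a}] = \mathbb{E}[\mathbf{x}] \circ \mathbf{a}$ by working component-wise. The element-wise (Hadamard) product $\mathbf{x} \circ \mathbf{a}$ is the vector whose $i$th entry is $x_i a_i$, so I will reduce the vector identity to the scalar identity $\mathbb{E}[x_i a_i] = \mathbb{E}[x_i] a_i$ for each $i = 1, \dots, m$, and then reassemble.

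First I would fix an index $i \in \{1, \dots, m\}$ and write $(\mathbf{x} \circ \mathbf{a})_i = x_i a_i$ directly from the definition of $\circ$. Since $a_i$ is a (deterministic) scalar constant, linearity of expectation gives $\mathbb{E}[x_i a_i] = a_i \, \mathbb{E}[x_i] = (\mathbb{E}[\mathbf{x}])_i \, a_i = (\mathbb{E}[\mathbf{x}] \circ \mathbf{a})_i$. Because this holds for every component $i$, the two vectors $\mathbb{E}[\mathbf{x} \circ \mathbf{a}]$ and $\mathbb{E}[\mathbf{x}] \circ \mathbf{a}$ agree entrywise and are therefore equal, which is the claimed identity.

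There is essentially no obstacle here: the only subtlety worth stating explicitly is that expectation of a vector is defined entrywise, and that scalar multiplication by the deterministic constant $a_i$ commutes with expectation. If a cleaner phrasing is desired, one can alternatively write $\mathbf{x} \circ \mathbf{a} = \mathrm{diag}(\mathbf{a})\,\mathbf{x}$ and invoke linearity of expectation for the fixed matrix $\mathrm{diag}(\mathbf{a})$, i.e.\ $\mathbb{E}[\mathrm{diag}(\mathbf{a})\,\mathbf{x}] = \mathrm{diag}(\mathbf{a})\,\mathbb{E}[\mathbf{x}] = \mathbb{E}[\mathbf{x}] \circ \mathbf{a}$; this is the same argument packaged without indices and sets up the notational groundwork for the covariance lemmas (Lemma~\ref{lemma:cov_circ}) that follow.
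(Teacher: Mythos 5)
Your proof is correct and follows essentially the same component-wise argument as the paper: both reduce the vector identity to $\mathbb{E}[x_i a_i] = a_i \mathbb{E}[x_i]$ for each index $i$ and reassemble. The alternative $\mathrm{diag}(\mathbf{a})$ phrasing you mention is a nice packaging but does not change the substance of the argument.
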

\begin{proof}
By definition, we have
\begin{equation}
    \mathbb{E}[\mathbf{x} \circ \mathbf{a}] = [\mathbb{E}[x_i a_i]]_i = [a_i \mathbb{E}[x_i]]_i = \mathbb{E}[\mathbf{x}] \circ \mathbf{a}\,.
\end{equation}
\end{proof}
\begin{lemma} \label{lemma:cov_circ}
    $\mathrm{Cov}[\mathbf{x} \circ \mathbf{a}] = \mathrm{Cov}[\mathbf{x}] \circ \mathbf{a} \mathbf{a}^{\top}$
\end{lemma}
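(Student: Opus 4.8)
The plan is to verify the claimed matrix identity entry by entry, reducing it to the elementary bilinearity of scalar covariance. Recall that for any random vector $\mathbf{v}$ the $(i,j)$ entry of $\mathrm{Cov}[\mathbf{v}]$ is the scalar covariance $\mathrm{Cov}(v_i, v_j)$, and that $\mathbf{x} \circ \mathbf{a} = (x_i a_i)_i$. First I would fix indices $i, j \in \{1, \dots, m\}$ and compute the $(i,j)$ entry of the left-hand side:
\[
(\mathrm{Cov}[\mathbf{x} \circ \mathbf{a}])_{ij} = \mathrm{Cov}(x_i a_i,\, x_j a_j) = a_i a_j \,\mathrm{Cov}(x_i, x_j),
\]
where the last step pulls out the deterministic scalars $a_i, a_j$. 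If one prefers to avoid quoting bilinearity of scalar covariance, the same formula follows by expanding $\mathrm{Cov}[\mathbf{x}\circ\mathbf{a}] = \mathbb{E}\big[(\mathbf{x}\circ\mathbf{a} - \mathbb{E}[\mathbf{x}\circ\mathbf{a}])(\mathbf{x}\circ\mathbf{a} - \mathbb{E}[\mathbf{x}\circ\mathbf{a}])^{\top}\big]$, using Lemma~\ref{lemma:E_circ} to rewrite $\mathbb{E}[\mathbf{x}\circ\mathbf{a}] = \bm{\mu}\circ\mathbf{a}$, so that $\mathbf{x}\circ\mathbf{a} - \bm{\mu}\circ\mathbf{a} = (\mathbf{x}-\bm{\mu})\circ\mathbf{a}$ and the $(i,j)$ entry of the outer product is $a_i a_j (x_i - \mu_i)(x_j - \mu_j)$; taking expectations recovers $a_i a_j \,\mathrm{Cov}(x_i,x_j)$.

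Next I would identify the $(i,j)$ entry of the right-hand side: since $(\mathbf{a}\mathbf{a}^{\top})_{ij} = a_i a_j$ and $\circ$ acts entrywise, $(\mathrm{Cov}[\mathbf{x}] \circ \mathbf{a}\mathbf{a}^{\top})_{ij} = (\mathrm{Cov}[\mathbf{x}])_{ij}\, a_i a_j = a_i a_j \,\mathrm{Cov}(x_i, x_j)$. Comparing the two expressions shows the matrices agree in every entry, which establishes the lemma. As an aside I might also record the slicker one-line version: writing $D = \mathrm{diag}(\mathbf{a})$ we have $\mathbf{x}\circ\mathbf{a} = D\mathbf{x}$, hence $\mathrm{Cov}[D\mathbf{x}] = D\,\mathrm{Cov}[\mathbf{x}]\,D^{\top}$, whose $(i,j)$ entry is $a_i (\mathrm{Cov}[\mathbf{x}])_{ij} a_j$ — again exactly $\mathrm{Cov}[\mathbf{x}]\circ\mathbf{a}\mathbf{a}^{\top}$.

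I do not anticipate a genuine obstacle: this is a routine entrywise verification. The only points deserving a little care are keeping the centering consistent (cleanly handled by invoking Lemma~\ref{lemma:E_circ}) and not conflating the Hadamard weighting by $\mathbf{a}\mathbf{a}^{\top}$ with ordinary matrix multiplication; stating the argument at the level of a generic $(i,j)$ entry makes both issues transparent.
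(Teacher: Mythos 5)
Your proof is correct and follows essentially the same route as the paper's: an entrywise computation showing $\mathrm{Cov}(x_i a_i, x_j a_j) = a_i a_j\,\mathrm{Cov}(x_i,x_j)$ and matching this against $(\mathbf{a}\mathbf{a}^{\top})_{ij} = a_i a_j$. Your treatment is slightly cleaner in handling the diagonal and off-diagonal entries uniformly (the paper splits them into separate variance and covariance cases), and the $D = \mathrm{diag}(\mathbf{a})$ one-liner is a nice compact alternative, but these are presentational refinements rather than a different argument.
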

\begin{proof}
There are two terms that need to be computed: (1) $\text{Var}[x_i a_i]$ and (2) $\text{Cov}[x_i a_i, x_j a_j]$. (1) is straightforward by properties of variance, namely, $\text{Var}[x_i a_i] = a_i^2 \text{Var}[x_i]$. (2) simply requires the definition of covariance, i.e.,
\begin{equation}
\text{Cov}[x_i a_i, x_j a_j] = \mathbb{E} \big[ \big( x_i a_i - \mathbb{E}[x_i a_i] \big) \big( x_j a_j - \mathbb{E}[x_j a_j] \big) \big] = a_i a_j \text{Cov}[x_i, x_j].
\end{equation}
Hence, it follows that $\text{Cov}[\mathbf{x} \circ \mathbf{a}] = \text{Cov}[\mathbf{x}] \circ \mathbf{a} \mathbf{a}^{\top}$.
\end{proof}
\begin{lemma} \label{lem:matrix_decomp}
Let $\bm{\mu} \in \mathbb{R}^m$ and let $\mathbf{A}_{\bm{\mu}}$ be such that the $i$th row of $\mathbf{A}_{\bm{\mu}}$ is equal to $[A_{ij} \mu_j]_j$. Then the following equation holds:
\begin{equation}
    \mathbf{A}_{\mu}^{\top} \mathbf{A}_{\mu} = \mathbf{A}^{\top} \mathbf{A} \circ \bm{\mu} \bm{\mu}^{\top}. \label{eq:mat_prod}
\end{equation}
\end{lemma}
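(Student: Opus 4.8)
The plan is to verify the identity entry by entry, since both sides are $m \times m$ matrices and the Hadamard product $\circ$ acts componentwise. First I would fix indices $j, k \in \{1, \dots, m\}$ and compute the $(j,k)$ entry of the left-hand side. Using the definition of matrix multiplication together with the hypothesis that the $i$th row of $\mathbf{A}_{\bm{\mu}}$ is $[A_{ij}\mu_j]_j$, the $(j,k)$ entry of $\mathbf{A}_{\bm{\mu}}^{\top} \mathbf{A}_{\bm{\mu}}$ is $\sum_{i} (\mathbf{A}_{\bm{\mu}})_{ij} (\mathbf{A}_{\bm{\mu})}_{ik} = \sum_{i} A_{ij}\mu_j\, A_{ik}\mu_k$. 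The next step is to pull the factors $\mu_j$ and $\mu_k$ out of the sum over $i$, since they do not depend on the summation index, which gives $\mu_j \mu_k \sum_{i} A_{ij} A_{ik} = \mu_j \mu_k\, (\mathbf{A}^{\top}\mathbf{A})_{jk}$.

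On the right-hand side, the $(j,k)$ entry of $\bm{\mu}\bm{\mu}^{\top}$ is $\mu_j \mu_k$, so by the definition of $\circ$ the $(j,k)$ entry of $\mathbf{A}^{\top}\mathbf{A} \circ \bm{\mu}\bm{\mu}^{\top}$ is $(\mathbf{A}^{\top}\mathbf{A})_{jk}\, \mu_j \mu_k$. Since this matches the expression obtained above for every choice of $(j,k)$, the two matrices are equal. An alternative, slightly slicker route would be to note that $\mathbf{A}_{\bm{\mu}} = \mathbf{A}\,\mathrm{diag}(\bm{\mu})$, so $\mathbf{A}_{\bm{\mu}}^{\top}\mathbf{A}_{\bm{\mu}} = \mathrm{diag}(\bm{\mu})\,\mathbf{A}^{\top}\mathbf{A}\,\mathrm{diag}(\bm{\mu})$, and then invoke the elementary fact that multiplying a matrix $\mathbf{M}$ on both sides by $\mathrm{diag}(\bm{\mu})$ scales its $(j,k)$ entry by $\mu_j \mu_k$, i.e., $\mathrm{diag}(\bm{\mu})\,\mathbf{M}\,\mathrm{diag}(\bm{\mu}) = \mathbf{M} \circ \bm{\mu}\bm{\mu}^{\top}$.

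There is essentially no obstacle here; the proof is routine. The only point requiring care is bookkeeping of the transpose, so that the correct index of $\mathbf{A}_{\bm{\mu}}$ is the one summed over, and confirming that the scalars $\mu_j, \mu_k$ are genuinely constant with respect to the summation index before factoring them out. Accordingly, I would simply present the entrywise computation cleanly in a single displayed line rather than dwell on it, since the lemma serves only as an algebraic building block for the posterior covariance and MAP formulas used earlier in the text.
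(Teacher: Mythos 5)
Your entrywise computation is correct and is essentially the same argument as the paper's proof, which likewise expands $\left[\mathbf{A}_{\bm{\mu}}^{\top}\mathbf{A}_{\bm{\mu}}\right]_{ij}$ as a sum, factors out $\mu_i\mu_j$, and identifies the result with the Hadamard product. The alternative route via $\mathbf{A}_{\bm{\mu}} = \mathbf{A}\,\mathrm{diag}(\bm{\mu})$ is a nice bonus but not needed.
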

\begin{proof}
To prove Equation \eqref{eq:mat_prod}, first note that $\bm{\mu} \bm{\mu}^{\top} = \left[ \mu_i \mu_j\right]_{ij}$. Let $i \in [m]$ and $j \in [n]$. By definition, we have $$ \left[ \mathbf{A}_{\bm{\mu}}^{\top} \mathbf{A}_{\bm{\mu}} \right]_{ij} = \sum_{l = 1}^m A_{l i} A_{l j} \mu_i \mu_j = \mu_i \mu_j \sum_{l = 1}^m A_{l i} A_{l j}.$$ Hence, we can see that $$ \left[ \mathbf{A}_{\bm{\mu}}^{\top} \mathbf{A}_{\bm{\mu}} \right]_{ij} = \left[ \sum_{l = 1}^m A_{l i} A _{l j}  \right]_{ij} \cdot \left[ \mu_i \mu_j \right]_{ij} $$ and we have the desired result.
\end{proof}
\begin{corollary} \label{cor:psd}
    Let $\mathbf{M} \in \mathbb{R}^{n \times n}$ be a positive-definite matrix and let $\mathbf{A}_{\bm{\mu}}$ be defined as in Lemma~\ref{lem:matrix_decomp}. It follows that,
    \begin{equation}
        \mathbf{A}_{\bm{\mu}}^\top \mathbf{M} \mathbf{A}_{\bm{\mu}} = \mathbf{A}^\top \mathbf{M} \mathbf{A} \circ \bm{\mu} \bm{\mu}^\top.
    \end{equation}
\end{corollary}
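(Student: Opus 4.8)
The plan is to reduce Corollary~\ref{cor:psd} to Lemma~\ref{lem:matrix_decomp} by absorbing a square-root factor of $\mathbf{M}$ into the forward model. First I would record the elementary observation that the column-scaling operation $\mathbf{A} \mapsto \mathbf{A}_{\bm{\mu}}$ is nothing but right-multiplication by $\mathrm{diag}(\bm{\mu})$, i.e.\ $\mathbf{A}_{\bm{\mu}} = \mathbf{A}\,\mathrm{diag}(\bm{\mu})$. This is immediate from the row description used to define $\mathbf{A}_{\bm{\mu}}$ in Lemma~\ref{lem:matrix_decomp}, since row $i$ of $\mathbf{A}\,\mathrm{diag}(\bm{\mu})$ is $[A_{ij}\mu_j]_j$. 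A consequence I will use is that column-scaling by $\bm{\mu}$ commutes with left-multiplication by any $\mathbf{C} \in \mathbb{R}^{n \times n}$, namely $\mathbf{C}\,\mathbf{A}_{\bm{\mu}} = (\mathbf{C}\mathbf{A})_{\bm{\mu}}$.

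Next, since $\mathbf{M}$ is positive-definite it factors as $\mathbf{M} = \mathbf{C}^{\top}\mathbf{C}$ for some $\mathbf{C} \in \mathbb{R}^{n\times n}$ (e.g.\ the symmetric square root, or a Cholesky factor). Then
\begin{equation}
    \mathbf{A}_{\bm{\mu}}^{\top}\mathbf{M}\mathbf{A}_{\bm{\mu}} = \mathbf{A}_{\bm{\mu}}^{\top}\mathbf{C}^{\top}\mathbf{C}\mathbf{A}_{\bm{\mu}} = (\mathbf{C}\mathbf{A}_{\bm{\mu}})^{\top}(\mathbf{C}\mathbf{A}_{\bm{\mu}}) = (\mathbf{C}\mathbf{A})_{\bm{\mu}}^{\top}(\mathbf{C}\mathbf{A})_{\bm{\mu}},
\end{equation}
using the commutation observation in the last step. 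Applying Lemma~\ref{lem:matrix_decomp} with $\mathbf{A}$ replaced by $\mathbf{C}\mathbf{A}$ gives $(\mathbf{C}\mathbf{A})^{\top}(\mathbf{C}\mathbf{A}) \circ \bm{\mu}\bm{\mu}^{\top} = \mathbf{A}^{\top}\mathbf{C}^{\top}\mathbf{C}\mathbf{A} \circ \bm{\mu}\bm{\mu}^{\top} = \mathbf{A}^{\top}\mathbf{M}\mathbf{A} \circ \bm{\mu}\bm{\mu}^{\top}$, which is exactly the claimed identity.

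An alternative I would consider, which bypasses both the factorization and Lemma~\ref{lem:matrix_decomp}, is a direct entrywise check: for $i,j \in [m]$,
\begin{equation}
    \left[\mathbf{A}_{\bm{\mu}}^{\top}\mathbf{M}\mathbf{A}_{\bm{\mu}}\right]_{ij} = \sum_{k=1}^{n}\sum_{l=1}^{n} A_{ki}\,\mu_i\, M_{kl}\, A_{lj}\,\mu_j = \mu_i\mu_j\left[\mathbf{A}^{\top}\mathbf{M}\mathbf{A}\right]_{ij},
\end{equation}
which is precisely entry $(i,j)$ of $\mathbf{A}^{\top}\mathbf{M}\mathbf{A} \circ \bm{\mu}\bm{\mu}^{\top}$. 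I would likely keep the factorization version in the paper, as it presents the corollary transparently as a consequence of the lemma just proved and makes clear that the role of the hypothesis is merely to guarantee the factorization $\mathbf{M} = \mathbf{C}^{\top}\mathbf{C}$ (positive-semidefiniteness, or indeed any $\mathbf{M}$ in the entrywise version, would suffice).

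There is essentially no hard step here. The only point requiring a little care is the bookkeeping that $(\mathbf{C}\mathbf{A})_{\bm{\mu}} = \mathbf{C}(\mathbf{A}_{\bm{\mu}})$, which is what allows Lemma~\ref{lem:matrix_decomp} to be invoked verbatim with $\mathbf{C}\mathbf{A}$ in place of $\mathbf{A}$; in the entrywise route the only subtlety is keeping the index ranges straight ($i,j$ over $[m]$, the summation indices $k,l$ over $[n]$).
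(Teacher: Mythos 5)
Your main argument is correct and is essentially the paper's own proof: the paper likewise writes $\mathbf{M} = \mathbf{L}\mathbf{L}^{\top}$ via the Cholesky factorization, verifies $\mathbf{L}^{\top}\mathbf{A}_{\bm{\mu}} = (\mathbf{L}^{\top}\mathbf{A})_{\bm{\mu}}$, and then invokes Lemma~\ref{lem:matrix_decomp} with $\mathbf{L}^{\top}\mathbf{A}$ in place of $\mathbf{A}$. Your side remark that the direct entrywise computation works for arbitrary $\mathbf{M}$ is a valid (and slightly more general) observation, but your chosen route matches the paper's.
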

\begin{proof}
    Since $\mathbf{M}$ is positive-definite, it has a lower-triangular Cholesky decomposition, $\mathbf{M} = \mathbf{L} \mathbf{L}^\top$. For all $i \in [m]$ and $j \in [n]$, we have the following equivalence:
    \begin{equation}
        \left[ \mathbf{L}^\top \mathbf{A}_{\bm{\mu}}\right]_{ij} = \sum_{l = 1}^n A_{lj} L_{li} \mu_j = \left[ \left( \mathbf{L}^\top \mathbf{A}\right)_{\bm{\mu}}\right]_{ij}.
    \end{equation}
    Therefore, $\mathbf{L}^\top \mathbf{A}_{\bm{\mu}} = \left(\mathbf{L}^\top \mathbf{A} \right)_{\bm{\mu}}$. Thus, Lemma~\ref{lem:matrix_decomp} implies,
    \begin{equation}
        \mathbf{A}_{\bm{\mu}}^\top \mathbf{M} \mathbf{A}_{\bm{\mu}} = \left( \mathbf{L}^\top \mathbf{A}_{\bm{\mu}} \right)^\top \left( \mathbf{L}^\top \mathbf{A}_{\bm{\mu}} \right) = \left(\mathbf{L}^\top \mathbf{A} \right)_{\bm{\mu}}^\top \left(\mathbf{L}^\top \mathbf{A} \right)_{\bm{\mu}} = \left(\mathbf{L}^\top \mathbf{A} \right)^\top \left(\mathbf{L}^\top \mathbf{A} \right) \circ \bm{\mu} \bm{\mu}^\top,
    \end{equation}
    and the result follows since $\left(\mathbf{L}^\top \mathbf{A} \right)^\top \left(\mathbf{L}^\top \mathbf{A} \right) = \mathbf{A}^\top \mathbf{M} \mathbf{A}$.
\end{proof}
\begin{lemma} \label{lem:circ_through_matmul}
   Let $\mathbf{A}_{\bm{\mu}} \in \mathbb{R}^{n \times m}$ be defined as in the above, $\mathbf{M} \in \mathbb{R}^{n \times n}$ and $\mathbf{y} \in \mathbb{R}^n$. Then,
   \begin{equation}
       \mathbf{A}_{\bm{\mu}}^\top \mathbf{M} \mathbf{y} = \mathbf{A}^\top \mathbf{M} \mathbf{y} \circ \bm{\mu}.
   \end{equation}
\end{lemma}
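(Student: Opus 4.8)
The plan is to prove this by a direct entry-wise computation, exactly in the style of Lemma~\ref{lem:matrix_decomp} and Corollary~\ref{cor:psd}. The only structural fact I need is the definition of $\mathbf{A}_{\bm{\mu}}$: its $i$th row is $[A_{ij}\mu_j]_j$, i.e.\ $[\mathbf{A}_{\bm{\mu}}]_{ij} = A_{ij}\mu_j$, and therefore $[\mathbf{A}_{\bm{\mu}}^\top]_{ij} = [\mathbf{A}_{\bm{\mu}}]_{ji} = A_{ji}\mu_i$. The key observation is that the factor $\mu_i$ appearing in the $(i,j)$ entry of $\mathbf{A}_{\bm{\mu}}^\top$ depends only on the row index $i$, so it can be pulled outside any sum over the column index.

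Concretely, I would fix $i \in [m]$ and expand the $i$th component of the left-hand side:
\begin{equation*}
    \left[\mathbf{A}_{\bm{\mu}}^\top \mathbf{M}\mathbf{y}\right]_i
    = \sum_{k=1}^n \sum_{l=1}^n [\mathbf{A}_{\bm{\mu}}^\top]_{ik}\, M_{kl}\, y_l
    = \sum_{k=1}^n \sum_{l=1}^n A_{ki}\mu_i\, M_{kl}\, y_l
    = \mu_i \sum_{k=1}^n \sum_{l=1}^n A_{ki}\, M_{kl}\, y_l
    = \mu_i \left[\mathbf{A}^\top \mathbf{M}\mathbf{y}\right]_i.
\end{equation*}
Since $\mu_i [\mathbf{A}^\top\mathbf{M}\mathbf{y}]_i$ is by definition the $i$th component of $(\mathbf{A}^\top\mathbf{M}\mathbf{y})\circ\bm{\mu}$, and $i$ was arbitrary, the two vectors agree coordinate-wise, which gives the claim. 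An equivalent and perhaps cleaner phrasing I might use instead is to note $\mathbf{A}_{\bm{\mu}} = \mathbf{A}\,\mathrm{diag}(\bm{\mu})$, hence $\mathbf{A}_{\bm{\mu}}^\top = \mathrm{diag}(\bm{\mu})\,\mathbf{A}^\top$, so $\mathbf{A}_{\bm{\mu}}^\top \mathbf{M}\mathbf{y} = \mathrm{diag}(\bm{\mu})\,(\mathbf{A}^\top\mathbf{M}\mathbf{y}) = \bm{\mu}\circ(\mathbf{A}^\top\mathbf{M}\mathbf{y})$; the two presentations are interchangeable, and I would pick whichever is notationally consistent with the rest of the appendix.

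There is essentially no obstacle here: unlike Lemma~\ref{lem:matrix_decomp}, where the Hadamard structure $\bm{\mu}\bm{\mu}^\top$ arises because \emph{both} factors in $\mathbf{A}_{\bm{\mu}}^\top\mathbf{A}_{\bm{\mu}}$ carry a $\bm{\mu}$-scaling, here only one factor of $\mathbf{A}_{\bm{\mu}}$ is present, so a single scaling vector $\bm{\mu}$ survives and it attaches cleanly as a left-multiplication by $\mathrm{diag}(\bm{\mu})$. The only thing to be careful about is bookkeeping of transposes — making sure the surviving index on $\mu$ is the output index $i$ rather than a summed index — which the explicit entry computation above makes transparent.
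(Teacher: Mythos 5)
Your proof is correct and follows essentially the same route as the paper's: a direct entry-wise expansion showing that the factor $\mu_i$ attaches to the output index, the only cosmetic difference being that the paper first reduces to $\mathbf{M}=\mathbf{I}_n$ by absorbing $\mathbf{M}\mathbf{y}$ into a single vector $\tilde{\mathbf{y}}$ before computing coordinates. The $\mathrm{diag}(\bm{\mu})$ reformulation you mention is a clean equivalent packaging of the same fact.
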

\begin{proof}
    It is sufficient to show the case when $\mathbf{M} = \mathbf{I}_n$, as otherwise we can simply define a new vector $\tilde{\mathbf{y}} = \mathbf{M} \mathbf{y}$. By matrix multiplication, for all $j \in [m]$,
    \begin{equation}
        \left[\mathbf{A}_{\bm{\mu}}^\top \mathbf{y} \right]_j = \sum_{i = 1}^n A_{ij} y_i \mu_j = \left[\mathbf{A}^\top \mathbf{y} \circ \bm{\mu} \right]_j.
    \end{equation}
\end{proof}
\begin{lemma} \label{circ_lemma}
Let $\mathbf{A}_{\bm{\mu}} \in \mathbb{R}^{n \times m}$ be defined as in the above. Then,
\begin{equation}
    \mathbf{A}(\mathbf{a} \circ \bm{\mu}) = \mathbf{A}_{\bm{\mu}} \mathbf{a}\,.
\end{equation}
\end{lemma}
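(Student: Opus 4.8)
The plan is to prove Lemma~\ref{circ_lemma} by a direct entrywise computation, since both sides are vectors in $\mathbb{R}^n$ and it suffices to check that their $i$th components agree for every $i \in [n]$. The only inputs needed are the definition of matrix–vector multiplication, the definition of the element-wise product (the $j$th entry of $\mathbf{a}\circ\bm{\mu}$ is $a_j\mu_j$), and the definition of $\mathbf{A}_{\bm{\mu}}$ from Lemma~\ref{lem:matrix_decomp}, namely that the $i$th row of $\mathbf{A}_{\bm{\mu}}$ is $[A_{ij}\mu_j]_j$.

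First I would fix $i \in [n]$ and expand the left-hand side: since $\mathbf{a}\circ\bm{\mu} \in \mathbb{R}^m$ has $j$th entry $a_j\mu_j$, we get
\begin{equation}
    \left[\mathbf{A}(\mathbf{a}\circ\bm{\mu})\right]_i = \sum_{j=1}^m A_{ij}\,(a_j\mu_j).
\end{equation}
Next I would expand the right-hand side using the row description of $\mathbf{A}_{\bm{\mu}}$:
\begin{equation}
    \left[\mathbf{A}_{\bm{\mu}}\mathbf{a}\right]_i = \sum_{j=1}^m (A_{ij}\mu_j)\,a_j.
\end{equation}
By commutativity and associativity of scalar multiplication the summands coincide term by term, so the two $i$th components are equal; as $i$ was arbitrary, the vectors are equal, which is the claim.

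There is no real obstacle here: the argument is a one-line index manipulation. The only point requiring care is bookkeeping — keeping the roles of the row index (ranging over the $n$ observations) and the column index (ranging over the $m$ scaling factors) straight, and noting that $\circ$ acts on the $m$-vector $\mathbf{a}$ \emph{before} $\mathbf{A}$ is applied, so that the dimensions match. I would also remark that this lemma is precisely what legitimizes the shorthand $\mathbf{A}(\mathbf{c}\circ\bm{\mu}) = \mathbf{A}_{\bm{\mu}}\mathbf{c}$ used in Equation~\eqref{eq:like_y_circ} and throughout the analysis.
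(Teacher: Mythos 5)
Your proof is correct and takes essentially the same approach as the paper, which simply asserts that the identity ``follows from the definition of $\mathbf{A}_{\bm{\mu}}$ and matrix multiplication''; your entrywise computation just writes out explicitly what the paper leaves implicit.
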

\begin{proof}
This property follows simply from the definition of $\mathbf{A}_{\bm{\mu}}$ and matrix multiplication.
\end{proof}
\begin{corollary}
    Define $\mathbf{A} \in \mathbb{R}^{n \times m}$ as above, let $\alpha, \beta \in \mathbb{R}$, and let $\mathbf{a}, \mathbf{b}, \bm{\mu} \in \mathbb{R}^m$. Then $\mathbf{A}(\mathbf{a} \circ \mathbf{\bm{\mu}})$ is linear in $\mathbf{a}$.
\end{corollary}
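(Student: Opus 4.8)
The plan is to reduce the claim to the linearity of ordinary matrix--vector multiplication via Lemma~\ref{circ_lemma}. First I would recall that Lemma~\ref{circ_lemma} gives $\mathbf{A}(\mathbf{a} \circ \bm{\mu}) = \mathbf{A}_{\bm{\mu}} \mathbf{a}$ for the fixed matrix $\mathbf{A}_{\bm{\mu}} \in \mathbb{R}^{n \times m}$ built from $\mathbf{A}$ and $\bm{\mu}$. Since $\mathbf{A}_{\bm{\mu}}$ does not depend on the vector being transformed, the map $\mathbf{a} \mapsto \mathbf{A}_{\bm{\mu}} \mathbf{a}$ is linear by the standard properties of matrix multiplication, and hence $\mathbf{a} \mapsto \mathbf{A}(\mathbf{a} \circ \bm{\mu})$ is linear as well.

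Alternatively --- and this is the route I would actually write out, since it avoids invoking $\mathbf{A}_{\bm{\mu}}$ --- I would argue componentwise. The only observation needed is that $\circ$ is linear in its first argument: for any $\alpha, \beta \in \mathbb{R}$ and $\mathbf{a}, \mathbf{b}, \bm{\mu} \in \mathbb{R}^m$, the $i$th entry of $(\alpha \mathbf{a} + \beta \mathbf{b}) \circ \bm{\mu}$ is $(\alpha a_i + \beta b_i)\mu_i = \alpha a_i \mu_i + \beta b_i \mu_i$, so $(\alpha \mathbf{a} + \beta \mathbf{b}) \circ \bm{\mu} = \alpha (\mathbf{a} \circ \bm{\mu}) + \beta (\mathbf{b} \circ \bm{\mu})$. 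Applying $\mathbf{A}$ to both sides and using linearity of the linear map $\mathbf{A}$ then yields $\mathbf{A}\big((\alpha \mathbf{a} + \beta \mathbf{b}) \circ \bm{\mu}\big) = \alpha\, \mathbf{A}(\mathbf{a} \circ \bm{\mu}) + \beta\, \mathbf{A}(\mathbf{b} \circ \bm{\mu})$, which is precisely the assertion that $\mathbf{a} \mapsto \mathbf{A}(\mathbf{a} \circ \bm{\mu})$ is linear.

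There is no real obstacle here: the statement is essentially a restatement of Lemma~\ref{circ_lemma} together with the elementary fact that Hadamard multiplication against a fixed vector is a linear operation. The only points to be careful about are keeping $\bm{\mu}$ fixed throughout --- the map is linear in $\mathbf{a}$, not jointly in $(\mathbf{a}, \bm{\mu})$ --- and noting that the scalars $\alpha, \beta$ appearing in the statement are exactly the coefficients witnessing linearity, so that additivity and homogeneity are both covered by the single displayed identity above.
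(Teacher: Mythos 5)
Your proposal is correct, and your first route is exactly the paper's proof: the paper invokes Lemma~\ref{circ_lemma} to write $\mathbf{A}\left(\left(\alpha \mathbf{a} + \beta \mathbf{b}\right) \circ \bm{\mu}\right) = \mathbf{A}_{\bm{\mu}}\left(\alpha \mathbf{a} + \beta \mathbf{b}\right)$ and then uses linearity of matrix--vector multiplication. Your preferred componentwise alternative is equally valid and marginally more elementary, since it bypasses $\mathbf{A}_{\bm{\mu}}$ entirely by noting that $\circ$ against a fixed $\bm{\mu}$ is itself linear; either write-up would be acceptable here.
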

\begin{proof}
    By Lemma~\ref{circ_lemma}, we have,
    \begin{equation}
        \mathbf{A} \left( \left(\alpha \mathbf{a} + \beta \mathbf{b} \right) \circ \bm{\mu} \right) = \mathbf{A}_{\bm{\mu}} \left(\alpha \mathbf{a} + \beta \mathbf{b} \right) = \alpha \mathbf{A}(\mathbf{a} \circ \bm{\mu}) +  \beta \mathbf{A} (\mathbf{b} \circ \bm{\mu}).
    \end{equation}
\end{proof}

\bibliographystyle{acm}
\bibliography{references}
\end{document}